\documentclass[letterpaper, 10 pt, conference]{ieeeconf}  

\usepackage{color,array}

\usepackage{amssymb,amsfonts}

\usepackage{graphicx}
\usepackage{verbatim}   
\usepackage{subcaption}
\usepackage{algorithm}
\usepackage{algpseudocode}

\usepackage{mathtools}
\usepackage{siunitx}
\usepackage{tabularx}
\usepackage{pgfplots}
\usepackage{pgfplotstable}

\usepackage{multirow}
\usepgfplotslibrary{fillbetween}
\usepackage{booktabs}
\pgfplotsset{compat = newest}
\usetikzlibrary{arrows,positioning,shapes,intersections,external,patterns,calc,fit,decorations,decorations.markings} 
\usepackage{cite}
\newtheorem{defn}{Definition}
\newtheorem{rem}{Remark}

\newtheorem{prop}{Proposition}

\newtheorem{assum}{Assumption}

\newcommand\tran{\mkern-2mu\raise1.25ex\hbox{$\scriptscriptstyle\top\hspace{0.5mm}$}\mkern-3.5mu}
\newcommand{\R}{\mathbb{R}}

\newcommand{\bm}[1]{{\boldsymbol{#1}}}

\DeclareMathOperator{\var}{var}

\newcommand{\z}{\bm z}

\newcommand{\x}{\bm x}

\usepackage[noabbrev]{cleveref} 
\crefname{rem}{Remark}{Remarks}
\crefname{exam}{Example}{Examples}
\crefname{assum}{Assumption}{Assumptions}
\crefname{prop}{Proposition}{Propositions}
\crefname{propy}{Property}{Properties}
\crefname{cor}{Corollary}{Corollaries}
\crefname{lem}{Lemma}{Lemmas}
\crefname{section}{Section}{Sections}
\crefname{thm}{Theorem}{Theorems}
\crefname{alg}{Algorithm}{Algorithms}
\crefname{defn}{Definition}{Definitions}
\crefname{figure}{Fig.}{Fig.}
\Crefname{figure}{Figure}{Figures}
\crefname{equation}{}{}

\IEEEoverridecommandlockouts                              

\title{\LARGE \bf
Data-Driven Boundary Control of Distributed Port-Hamiltonian Systems
}

\author{Thomas Beckers and Leonardo Colombo
\thanks{Thomas Beckers is with the Department of Computer Science, Vanderbilt University, Nashville, TN 37212, USA {\tt\small thomas.beckers@vanderbilt.edu}}%
\thanks{L. Colombo is with Centre for Automation and Robotics (CSIC-UPM), Ctra. M300 Campo Real, Km 0,200, Arganda
del Rey - 28500 Madrid, Spain.{\tt\small leonardo.colombo@csic.es}} 
\thanks{ L. Colombo acknowledge financial support from Grant PID2022-137909NB-C21 funded by MCIN/AEI/ 10.13039/501100011033. The research leading to these results was supported in part by iRoboCity2030-CM, Robótica Inteligente para Ciudades Sostenibles (TEC-2024/TEC-62), funded by the Programas de Actividades I+D en Tecnologías en la Comunidad de Madrid.}
}

\begin{document}

\maketitle
\thispagestyle{empty}
\pagestyle{empty}

\begin{abstract}
Distributed Port-Hamiltonian (dPHS) theory provides a powerful framework for modeling physical systems governed by partial differential equations and has enabled a broad class of boundary control methodologies. Their effectiveness, however, relies heavily on the availability of accurate system models, which may be difficult to obtain in the presence of nonlinear and partially unknown dynamics. To address this challenge, we combine Gaussian Process distributed Port-Hamiltonian system (GP-dPHS) learning with boundary control by interconnection. The GP-dPHS model is used to infer the unknown Hamiltonian structure from data, while its posterior uncertainty is incorporated into an energy-based robustness analysis. This yields probabilistic conditions under which the closed-loop trajectories remain bounded despite model mismatch. The method is illustrated on a simulated shallow water system.
\end{abstract}

\section{Introduction}
The modeling and control of physical systems governed by partial differential equations (PDEs) is a crucial task in a broad range of domains such as physics, engineering, and applied mathematics~\cite{derler2011modeling}. Applications range from the control of flexible structures to fluid dynamics and wave propagation phenomena. A large and powerful class of these physical systems can be described by distributed Port-Hamiltonian systems (dPHS), see~\cite{van2002hamiltonian, macchelli2004port}.

Distributed PHSs are a powerful mathematical framework for modeling and analyzing physical systems with a spatial domain. The dPHS approach describes physical systems abstracting from detailed equations to focus on energy interactions and conservation principles. The dynamics of the system is described by a Hamiltonian functional, while the boundary interactions are described by power-based port variables. This unique approach allows for systematic analysis of complex physical systems, providing a clear understanding of their behavior and enabling the design of efficient control strategies. Thus, in recent years, the boundary control of dPHS has gained significant attention in research, see \cite{van2000l2, le2005dirac,schoberl2012casimir}.

A prominent control approach is the control by interconnection via Casimir generation (energy-Casimir method) developed for stabilization~\cite{macchelli2004port,rodriguez2001stabilization}. The result is an energy-balancing passivity-based controller that shapes the closed-loop energy function to introduce a minimum in a desired configuration. However, this method is fundamentally limited by the so-called ``dissipation obstacle'' when dealing with internal energy dissipation (such as internal friction in the shallow water equations). When this happens, it is impossible to shape the energy of the system along the directions in which dissipation is present. To solve this, recent control research has developed generalized methodologies to define new passive outputs for the dPHS. Interconnecting the boundary controller to this new passive output removes the limitations, allowing the control by interconnection to overcome the dissipation obstacle and achieve stability~\cite{macchelli2015control}.

\begin{figure}[t]
\begin{center}
	\includegraphics[width=\columnwidth]{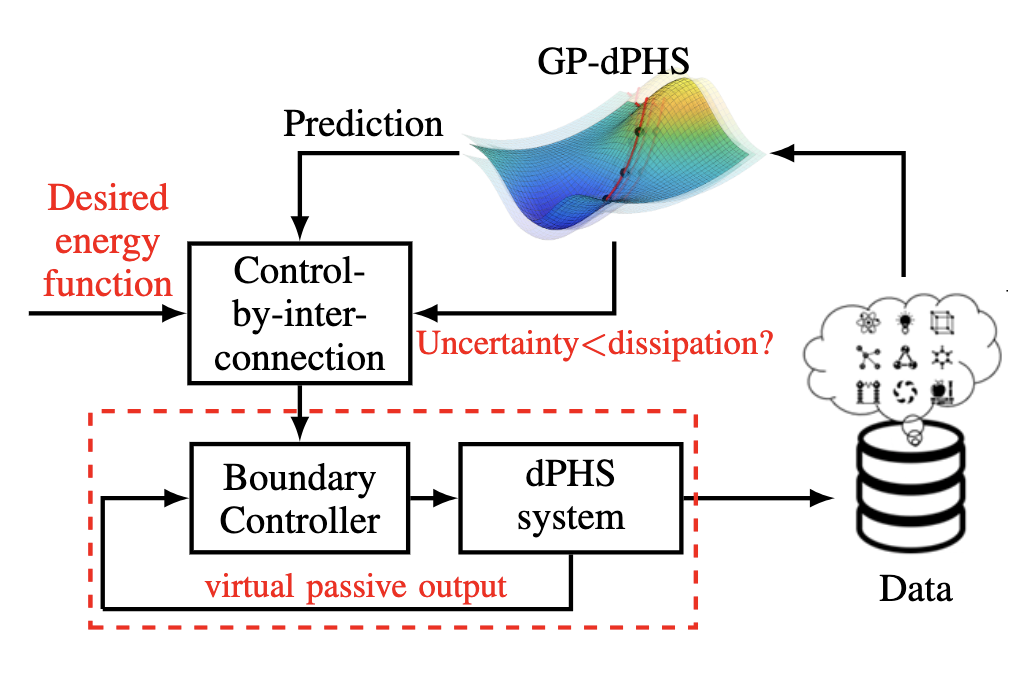}
	\vspace{-0.1cm}\caption{Overview of the control-by-interconnection approach. A GP-dPHS model is used to learn the partially unknown dynamics of the PDE systems. Then, the model enables the design of a boundary controller if the system dissipation dominates the model's uncertainty.}\vspace{-0.5cm}
	\label{fig:over}
\end{center}
\end{figure}

Despite its effectiveness, these model-based control methods require an accurate model of the system's Hamiltonian functional, which can be challenging to obtain due to highly nonlinear dynamics or the complexity of the physical system. To address these limitations, recent research explores data-driven and machine learning techniques. Recently, Gaussian process distributed Port-Hamiltonian systems (GP-dPHS) were introduced as Bayesian data-driven techniques for the identification of physical systems, see~\cite{9992733, tan2024physics, beckers2025physics}. Gaussian processes (GPs) provide a probabilistic framework for modeling complex and noisy data, enabling not only predictions but also uncertainty quantification. The goal of this paper is to combine the power of boundary control by interconnection approaches with data-driven Bayesian GP-dPHS models to enable the control of nonlinear PDE systems with partially unknown dynamics.

\textbf{Contribution:} In this paper, we propose a data-driven Bayesian control approach for physical PDE systems with partially unknown dynamics. For this purpose, we employ Gaussian Process distributed Port-Hamiltonian system models to learn the unknown dynamics of the system based on observations. We then introduce a boundary controller based on control by interconnection, which uses the mean prediction of the GP-dPHS model to render the closed-loop dynamics to shape the desired energy, overcoming any dissipation obstacles via new passive outputs. The uncertainty quantification of the GP-dPHS model allows us to consider model mismatch and derive conditions for boundedness of the closed-loop. Finally, we evaluate the closed-loop performance through the identification and control of the shallow water equations acting as a nonlinear PDE.

The remainder of the paper is structured as follows. We introduce the core concepts and the problem setting in~\cref{sec:def}, followed by the proposed GP-dPHS control methodology in~\cref{sec:ctrl}. Finally, a simulation evaluation on the shallow water equations presents the benefits of our approach in~\cref{sec:sim}.

\section{Preliminaries}\label{sec:def}
In this section, we briefly introduce Gaussian Process regression, the class of distributed Port-Hamiltonian systems and the problem setting. 
\subsection{Gaussian Process Regression}\label{sec:GPIntro}
Gaussian Processes (GPs) are a class of nonparametric models that are widely used for regression and prediction tasks \cite{rasmussen2006gaussian}. 
In contrast to parametric approaches, GPs do not assume a fixed functional form, which allows them to flexibly capture complex relationships in the data. 
A key advantage of GP models is their ability to provide not only predictions but also a principled quantification of uncertainty. 
This feature is particularly valuable in practical scenarios where only limited data are available, as GPs can still achieve good performance by exploiting structural assumptions encoded in the kernel function.

A Gaussian Process is fully specified by a mean function $m(\cdot)$ and a covariance (kernel) function $k(\cdot,\cdot)$. 
For any pair of inputs $\x\in\R^n$ and $\x'\in\R^n$, the kernel $k(\x,\x')$ defines the covariance between the corresponding function values. 
Prior to observing data, a GP defines a distribution over functions, $f \sim \mathcal{GP}(m(\x), k(\x,\x'))$.

Let $X = \{x^{(1)}, x^{(2)}, \dots, x^{(N)}\}$ denote the input data and $Y = \{y^{(1)}, y^{(2)}, \dots, y^{(N)}\}$ the corresponding outputs, forming the dataset $\mathcal{D}$. 
The observations are typically assumed to be corrupted by independent Gaussian noise with zero mean and variance $\sigma_n^2\in\R_+$. Conditioning on the dataset $\mathcal{D}$, the posterior distribution at a test input $x^*\in\R^n$ is again Gaussian, with mean and covariance given by
\begin{align}
    \mu(x^*) &\!=\! K(x^*, X)\bigl[K(X,X) + \sigma_n^2 I\bigr]^{-1} Y, \label{for:PostMean}\\
    \var(x^*) &\!=\! k(x^*,x^*) \!-\! K(x^*,X)\bigl[K(X,X)\! +\! \sigma_n^2 I\bigr]^{-1}K(X,x^*). \label{for:PostVar}
\end{align}

Here, $K(X,X)\in\R^{N\times N}$ denotes the covariance matrix of the training inputs, while $K(x^*,X)\in\R^{1\times N}$ and $K(X,x^*)\in\R^{N\times 1}$ denote the cross-covariances between the test input and the training inputs. Moreover, $k(x^*,x^*)$ represents the prior variance at $x^*$. 
The resulting predictive distribution $p(y^* \mid x^*, \mathcal{D})$ therefore provides both a point estimate and an associated uncertainty measure, which is a defining characteristic of GP regression.

\subsection{Distributed Port-Hamiltonian systems}
Composing Hamiltonian systems with input/output ports leads to a so-called Port-Hamiltonian system, which is a dynamical system with ports that specify the interactions of its components. In~\cite{macchelli2004port}, the classical finite-dimensional Port-Hamiltonian formulation of a dynamical system is generalized in order to include the distributed-parameter and multivariable case. In contrast to finite-dimensional PHS, the interconnection, damping, and input/output matrices are replaced by matrix differential operators that are assumed to be constant, i.e., no explicit dependence on the state (energy) variables is considered. As in finite dimensions, given the Stokes--Dirac structure, the model of the system easily follows once the Hamiltonian function is specified. 

The resulting class of infinite-dimensional systems in Port-Hamiltonian form is quite general, thus allowing the interpretation of many classical PDEs within this framework and the description of several physical phenomena, such as heat conduction, piezoelectricity, and elasticity. 

In this paper, we follow~\cite{macchelli2015control} and focus on the class of distributed port-Hamiltonian systems described by the following PDE
\begin{equation}
\frac{\partial x}{\partial t}(t,z) = \left(P_1 \frac{\partial}{\partial z}  + (P_0 - G_0)\right)\frac{\delta H}{\delta x}(x(t,z)).
\label{for:pch}
\end{equation}
Such a class generalizes what has been studied in \cite{le2005dirac, jacob2012linear}, and includes common PDE systems such as the shallow water equation. Here, the spatial domain is $\mathcal{Z} = [a,b]$ and $x(t,\cdot) \in  L^2(a,b;\mathbb{R}^n)$ denotes the state (energy) variable. Moreover, $P_1 = P_1^\top > 0$, $P_0 = -P_0^\top$, and $G_0 = G_0^\top \geq 0$, while $H$ is the Hamiltonian functional (e.g., the total energy) of the system, which is not necessarily quadratic in the energy variables. Finally, $\delta H/\delta x$ denotes the variational derivative of $H$.

To define a distributed port-Hamiltonian system, the PDE~\eqref{for:pch} has to be completed with a boundary port. In this respect, the boundary port variables associated to~\eqref{for:pch} are the vectors $f_{\partial}, e_{\partial} \in \mathbb{R}^n$ defined by
\begin{equation}
\label{for:boundaryport}
\begin{pmatrix}
f_{\partial} \\
e_{\partial}
\end{pmatrix}
=
\frac{1}{\sqrt{2}}
\begin{pmatrix}
P_1 & -P_1 \\
I & I
\end{pmatrix}
\renewcommand*{\arraystretch}{1.2}
\begin{pmatrix}
\frac{\delta H}{\delta x}(b) \\
\frac{\delta H}{\delta x}(a)
\end{pmatrix}.
\end{equation}

These variables are linear combinations of the boundary traces of the co-energy variables. To obtain a boundary control system, inputs and outputs have to be defined. Let $W$ and $\widetilde{W}$ be a pair of $n \times 2n$ full-rank real matrices such that $(W^\top \ \widetilde{W}^\top)$ is invertible, and
\begin{equation}
W \Sigma W^\top = 0, \quad
W \Sigma \widetilde{W}^\top = I, \quad
\widetilde{W} \Sigma \widetilde{W}^\top = 0,
\end{equation}
where
\[
\Sigma =
\begin{pmatrix}
0 & I \\
I & 0
\end{pmatrix}.
\]

The boundary input $u\colon\R_+\to\R^n$ and output $y\colon\R_+\to\R^n$ can be defined as
\begin{equation}
\label{for:bound}
u(t) = W
\begin{pmatrix}
f_{\partial}(t) \\
e_{\partial}(t)
\end{pmatrix},
\quad
y(t) = \widetilde{W}
\begin{pmatrix}
f_{\partial}(t) \\
e_{\partial}(t)
\end{pmatrix}.
\end{equation}

Then, the following energy balance equation holds
\begin{equation}
\frac{d}{dt}H(x(t,\cdot)) = - \int_a^b \left( \frac{\delta H}{\delta x} \right)^\top G_0 \frac{\delta H}{\delta x} \, dz + y^\top u \leq y^\top u,
\end{equation}
where the integral term accounts for the dissipative effects along the spatial domain, while $y^\top u$ represents the supplied power at the boundary port.

\subsection{Problem Setting}\label{sec:ps}
We consider the challenge of designing a boundary control law for a partially unknown physical PDE system whose dynamics can be modeled as a distributed Port-Hamiltonian system~\cref{for:pch}. For this, we assume that the Hamiltonian functional $H$ is sufficiently smooth and \emph{only partly or completely unknown} due to unstructured nonlinearities in the system (such as nonlinear stress-strain material properties or changes in material properties over the spatial domain). To address the problem, we introduce the following assumptions.
\begin{assum}\label{assum:1}
The system matrices $P_0, P_1,$ and $G_0$ of the PDE system~\cref{for:pch} are assumed to be known from basic physics principles. 
\end{assum}
\begin{assum}\label{assum:2}
We can observe the state of the PDE system~\cref{for:pch} at some temporal instants $t_i$ and spatial points $z_j$ to generate a set of observations $\{x(t_1,z_1),\ldots,x(t_1,z_{N_z}), \ldots, x(t_{N_t},z_1),\ldots,x(t_{N_t},z_{N_z})\}$.
\end{assum}
\begin{assum}\label{assum:3}
There exists a unique and continuous solution $x(t,z)$ for the PDE system under the boundary input-output structure~\cref{for:bound}.
\end{assum}
\Cref{assum:1} is only mildly restrictive since these matrices are often known from physical prior knowledge of the system, whereas the Hamiltonian typically contains the unstructured, hard-to-model nonlinearities. However, determining the dissipation matrix $G_0$ might be challenging, but the elements of $G_0$ can be estimated following the procedure in~\cite{tan2024physics}.

\Cref{assum:2} ensures that we can collect data from the PDE system, requiring observability of the state. If that is not the case, an observer needs to be implemented. Finally, \cref{assum:3} guarantees the existence of a solution so that the derivation of a surrogate model is meaningful.

Since the controller is synthesized from a learned model and then applied to the physical PDE system, the effect of model mismatch must be explicitly taken into account.

\textbf{Goal:} Given a dataset as in~\cref{assum:2}, our aim is to design a controller that allows us to shape the Hamiltonian functional of the closed-loop system while overcoming the internal dissipation obstacle induced by the distributed dissipative structure.

\section{Control Design based on GP-dPHS}
\label{sec:ctrl}
The general idea of the proposed approach is depicted in~\cref{fig:over}. First, we collect (noisy) state measurements across the spatial and temporal domains from the physical system, which are then used to train a Gaussian Process distributed Port-Hamiltonian system model. Due to its Bayesian nature, this model provides both a prediction of the unknown physical dynamics and an explicit uncertainty quantification. The mean prediction is used to design a boundary control law based on control by interconnection via Casimir generation. To handle the internal dissipation present in systems like the shallow water equations, we exploit a new passive output that overcomes the classical dissipation obstacle. The uncertainty prediction of the GP-dPHS model can be leveraged to ensure the robustness of the control law against model mismatch. We begin with data collection and training of the GP-dPHS model, as introduced in~\cite{tan2024physics}. Afterward, we present the boundary control by interconnection method.

\subsection{Data Collection}\label{subsubsec:Data}
First, we leverage the collected data of the PDE system with unknown dynamics~\cref{for:pch} to train our GP-dPHS model. Given~\cref{assum:2}, we can define a set of observations of the system as 
\begin{align}
    \mathcal{D} = \{t_i,z_j,x(t_i,\z_j),u(t_i)\}_{i=0,j=0}^{i=N_t-1,j=N_z-1}, \label{for:dataD}
\end{align}
where \( x(t_i, z_j) \) denotes the observed state at time \( t_i \) and spatial point \( \z_j \), corresponding to \( N_t \) time steps and \( N_z \) spatial points. These observations are paired with a boundary input sequence \( \{u(t_1), u(t_2), \dots,u(t_{N_t})\} \) used to excite the system and generate a rich dataset.
\begin{rem}
As in standard system identification, the accuracy of the learned model generally improves with the availability of informative data. However, no strict assumptions are imposed on the input sequence, as the model quality is subsequently reflected in the associated uncertainty of the GP-dPHS and, in this way, included in the design of the controller.
\end{rem}

In practice, the number of observations is usually limited across the spatial domain, but we also require state derivatives for training of the GP-dPHS model. Hence, we deploy GP regression to upsample the data and to estimate the state derivatives. Using a smooth kernel function trained on the input data $\{t_i,z_j,u(t_i)\}_{i=0,j=0}^{i=N_t-1,j=N_z-1}$ and output data $\{x(t_i,z_j)\}_{i=0,j=0}^{i=N_t-1,j=N_z-1}$, we can take the derivative of the posterior mean as an estimate of the state derivative \(\frac{\partial}{\partial t} x\), which allows the dataset to be augmented with \( N_e \) additional spatial points, where \( N_e \gg N_z \). Consequently, we construct a new dataset \(\mathcal{E}\), which comprises the states \( X = [\tilde{x}(t_0), \ldots, \tilde{x}(t_{N_t-1})] \) and the state derivatives \( \dot{X} = \left[\frac{\partial \tilde{x}(t_0)}{\partial t}, \ldots, \frac{\partial \tilde{x}(t_{N_t-1})}{\partial t} \right] \), where \(\tilde{x}(t_i) = [x(t_i, z_0)^\top, \ldots, x(t_i, z_{N_e-1})^\top]^\top\) represents the spatially stacked state at time \( t_i \). In this way, we define the augmented dataset \(\mathcal{E} = [X, \dot{X}]\). For further details, see~\cite{tan2024physics}.

\subsection{Training of the Gaussian Process distributed Port-Hamiltonian model} \label{subsubsec:trainingGPdPHS}
Next, since the exact underlying dynamics are unknown, we leverage the dataset \(\mathcal{E}\) to learn a distributed Port-Hamiltonian model of the PDE system. This approach uses a Gaussian process to approximate the unknown Hamiltonian functional $H$ in~\cref{for:pch}, which is equivalent to learning an energy representation of the PDE system. In practice, the GP-dPHS model is trained on the finite-dimensional surrogate representation induced by the spatially stacked state \(\tilde{\x}(t_i)\). This construction is made possible because GPs are invariant under linear transformations, see~\cite{9992733}, allowing us to incorporate the Hamiltonian functional derivatives into the GP framework. This leads to the following GP representation of the dPHS dynamics:
\begin{align}
    \frac{\partial \x}{\partial t} \sim \mathcal{GP}(0, k_{dphs}(\x, \x')), \label{gp:frac}
\end{align}
where the system dynamics are encapsulated within the probabilistic model. Consequently, we define a new physics-informed kernel function \( k_{dphs} \) under the dPHS formalism, which takes the form
\begin{align}
    k_{dphs}(\x, \x') = \sigma^2_f\widehat{JR} \delta_\x \exp\left(-\frac{\|\x - \x'\|^2}{2 \varphi_l^2}\right)\delta_{\x'}^\top \widehat{JR}^\top,
\end{align}
where $\widehat{JR} = \left(P_1 \frac{\partial}{\partial z}  + (P_0 - G_0)\right)$. At the implementation level, the operators involved in \(k_{dphs}\) are evaluated on the finite-dimensional surrogate representation induced by the sampled spatial discretization. This kernel is based on the squared exponential function to represent the smooth Hamiltonian functional. The model is trained as a standard GP using the dataset \(\mathcal{E}\). The hyperparameters \( \varphi_l\in\R_{>0} \) and \( \sigma_f\in\R_{>0} \) are optimized by minimizing the negative log marginal likelihood. This completes the training of the GP-dPHS model. Utilizing the invariance of GPs under linear transformations, we can write the dynamics of the GP-dPHS model as
\vspace{-5pt}
\begin{align}\label{dPHSform}
\frac{\partial \hat{x}}{\partial t}(t,z) = \left(P_1 \frac{\partial}{\partial z}  + (P_0 - G_0)\right)\frac{\delta \mu(\hat{H}|\mathcal{E})}{\delta \hat{x}}(\hat{x}(t,z)),
\end{align}
where \(\hat{H}\) denotes the learned Hamiltonian functional endowed with the GP posterior distribution, and \(\mu(\hat{H}|\mathcal{E})\) denotes its posterior mean. As a consequence, we obtain the model~\cref{dPHSform} for the unknown PDE system based on the training dataset~\(\mathcal{E}\). For further details, see~\cite{tan2024physics}.
\subsection{Boundary Control by Interconnection}
In this section, we propose a data-driven boundary control approach for the (partially) unknown PDE system~\cref{for:pch}. Inspired by~\cite{macchelli2015control}, we consider a control design using the concept of control-by-interconnection. For this purpose, the following finite-dimensional control system in PHS form is considered
\begin{equation}
\dot{x}_c(t) = J_c \frac{\partial H_c}{\partial x_c}(x_c(t)) + G_c u_c(t), \,\,
y_c(t) = G_c^\top \frac{\partial H_c}{\partial x_c}
\label{eq:controller}
\end{equation}
where $x_c \in \mathbb{R}^{n_c}$ denotes the controller state, $J_c = -J_c^\top\in\R^{n_c\times n_c}$ is the interconnection matrix, $G_c\in\R^{n_c\times n_c}$ is the input matrix, and $H_c$ is the controller Hamiltonian to be designed. Since we do not know the Hamiltonian of the PDE system~\cref{for:pch}, the controller is designed based on the model~\cref{dPHSform} and then applied to the actual system. Thus, the next steps describe the controller design, followed by the analysis of connecting the controller to the actual system~\cref{for:pch}.

The controller is interconnected with the boundary port $(u,y)$ of the model~\cref{dPHSform} through a power-preserving interconnection given by
\begin{equation}
\begin{pmatrix}
u \\
y
\end{pmatrix}
=
\begin{pmatrix}
0 & -I \\
I & 0
\end{pmatrix}
\begin{pmatrix}
u_c \\
y_c
\end{pmatrix}
+
\begin{pmatrix}
u' \\
0
\end{pmatrix},
\label{eq:interconnection}
\end{equation}
where $u'$ represents a new external input, leading to a closed-loop system whose total Hamiltonian is
\begin{align}
H_d(x,x_c) = \mu(\hat{H}(x)|\mathcal{E}) + H_c(x_c).
\end{align}
By appropriately selecting $H_c$, the energy function $H_d$ can be shaped so that it attains a minimum at a desired equilibrium.

As in the finite-dimensional framework, the design process becomes more tractable by introducing invariant quantities, commonly referred to as Casimir functions. These invariants establish a relation between the plant and controller states that is independent of the particular choice of Hamiltonians.

\begin{defn}
Consider the autonomous ($u' = 0$) closed-loop system obtained from \eqref{eq:interconnection}. A function $C \colon X \times \mathbb{R}^{n_c} \to \mathbb{R}$ is called a Casimir function if it remains constant along all trajectories of the system, i.e. $\dot{C}(x(t,\cdot), x_c(t)) = 0 $
for any admissible choice of $\mu(\hat{H}|\mathcal{E})$ and $H_c$.
\end{defn}

The invariance of $C$ implies the existence of an intrinsic relation between $x_c$ and $x$ that does not depend on the specific Hamiltonian functions. Based on this definition, we note that any Casimir function of the learned model~\cref{dPHSform} is also a Casimir function of the PDE system~\cref{for:pch}.
\begin{prop}\label{prop:1}
Let $C \colon X \times \mathbb{R}^{n_c} \to \mathbb{R}$ be a Casimir function of the autonomous closed-loop system obtained from the controller~\cref{eq:controller} and \textit{model}~\cref{dPHSform}, connected via~\eqref{eq:interconnection}. Then, $C$ is also a Casimir function for the autonomous closed-loop system obtained from the controller~\cref{eq:controller} and \textit{system}~\cref{for:pch}, connected via~\eqref{eq:interconnection}.
\end{prop}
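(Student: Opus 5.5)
The proof rests on the definition of a Casimir function itself: $C$ must remain constant along closed-loop trajectories \emph{for any admissible choice} of the plant Hamiltonian and of $H_c$. The model~\cref{dPHSform} and the system~\cref{for:pch} have the same structure. They share the operator $P_1\frac{\partial}{\partial z}+(P_0-G_0)$ (by \cref{assum:1}), the boundary port~\eqref{for:boundaryport}, the input/output maps~\eqref{for:bound} built from $W,\widetilde W$, and the interconnection~\eqref{eq:interconnection}. They differ only in which Hamiltonian functional is inserted: $\mu(\hat H|\mathcal{E})$ for the model and the true $H$ for the system. So the plan is to show that the Casimir property is a property of this common Stokes--Dirac/interconnection structure alone, and then to specialize the Hamiltonian to $H$.

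First, I write the autonomous closed loop ($u'=0$) for a generic smooth Hamiltonian $\mathcal{H}$ plugged into the plant: $\partial_t x = (P_1\partial_z + P_0 - G_0)\frac{\delta \mathcal{H}}{\delta x}$, with boundary input $u=-y_c = -G_c^\top \frac{\partial H_c}{\partial x_c}$ and controller dynamics $\dot x_c = J_c\frac{\partial H_c}{\partial x_c} + G_c y$. Here $y$ is given by~\eqref{for:bound} in terms of the boundary traces of $\frac{\delta\mathcal{H}}{\delta x}$. Second, I compute
\[
\dot C = \int_a^b \Big(\frac{\delta C}{\delta x}\Big)^{\top}(P_1\partial_z + P_0-G_0)\frac{\delta \mathcal{H}}{\delta x}\,dz + \Big(\frac{\partial C}{\partial x_c}\Big)^{\top}\Big(J_c\frac{\partial H_c}{\partial x_c}+G_c y\Big).
\]
After integrating by parts, this is a bilinear expression in the pair $(\frac{\delta\mathcal{H}}{\delta x},\frac{\partial H_c}{\partial x_c})$, including their boundary traces. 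The operator coefficients depend only on $C$ and on the known matrices.

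Third, I use the hypothesis. By the definition, $C$ is a Casimir for the model loop for every admissible $(\mathcal{H},H_c)$, not merely for the particular learned mean. The model class~\cref{dPHSform} is simply~\cref{for:pch} with the Hamiltonian slot left free. Hence the true $H$, which is smooth by the problem setting, is an admissible choice, and substituting $\mathcal{H}=H$ gives $\dot C=0$ along trajectories of the system loop. \Cref{assum:3} guarantees that these trajectories exist, so the statement is not vacuous. Equivalently, the Casimir conditions reduce to the structural equations
\[
(P_1\partial_z + P_0 - G_0)^{*}\tfrac{\delta C}{\delta x}=0,\qquad J_c\tfrac{\partial C}{\partial x_c}=0,
\]
together with the matching boundary conditions linking $\frac{\delta C}{\delta x}(a),\frac{\delta C}{\delta x}(b)$ to $G_c^\top\frac{\partial C}{\partial x_c}$. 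None of these involve the Hamiltonian, so they hold identically for both loops.

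The main obstacle is making precise that ``any admissible choice'' covers the true $H$. The learned mean lives in a GP function class, and one might worry that admissibility is restricted to that class. I would handle this by deriving the structural conditions explicitly: because $\dot C=0$ must hold for arbitrary $H_c$ and for arbitrary plant co-energy profiles, the coefficients multiplying $\frac{\delta \mathcal{H}}{\delta x}$ in the interior, its boundary traces, and $\frac{\partial H_c}{\partial x_c}$ must each vanish. That yields the Hamiltonian-independent conditions above, which then transfer directly to the system loop.
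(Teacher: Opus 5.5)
Your proposal is correct and follows the paper's own argument: the Casimir condition is fixed by the shared structure (the operator $P_1\frac{\partial}{\partial z}+(P_0-G_0)$, the boundary port, $W$, $\widetilde W$ and \eqref{eq:interconnection}) rather than by the Hamiltonian, so it survives replacing $\mu(\hat H\mid\mathcal{E})$ by the true $H$, and your computation of $\dot C$ just makes explicit what the paper states in two lines. One slip in your side remark, which does not affect the Hamiltonian-independence your proof uses: $u=-G_c^\top\frac{\partial H_c}{\partial x_c}$ ties half of the boundary traces of $\frac{\delta\mathcal{H}}{\delta x}$ to $\frac{\partial H_c}{\partial x_c}$, so those coefficients cannot be set to zero separately, and after eliminating the constrained traces the controller condition is $J_c\frac{\partial C}{\partial x_c}+G_c\widetilde W R\bigl[\frac{\delta C}{\delta x}(b)^\top\ \frac{\delta C}{\delta x}(a)^\top\bigr]^\top=0$ rather than $J_c\frac{\partial C}{\partial x_c}=0$ (here $R$ is the matrix in \eqref{for:boundaryport}, i.e.\ \eqref{eq:prop7_cond1} with $S=0$).
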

\begin{proof}
The claim follows from the structural nature of the Casimir condition, which depends on the interconnection relations and not on the particular choice of Hamiltonian. Since the learned model~\cref{dPHSform} and the actual system~\cref{for:pch} share the same interconnection structure, any Casimir function for the former is also a Casimir function for the latter.
\end{proof}
In the following, we focus on Casimir functions of the form
\begin{equation}
C(x(t,\cdot), x_c(t)) = \Gamma^\top x_c(t) + \Psi(x(t,\cdot)),
\label{eq:casimir}
\end{equation}
where $\Gamma \in \mathbb{R}^{n_c}$ and $\Psi : X \to \mathbb{R}$ is a functional of $x$.
The knowledge of Casimir functions can then be exploited to shape the closed-loop energy. However, traditional energy-shaping passivity-based control approaches encounter the fundamental ``dissipation obstacle'', not being able to deal with equilibria that require an infinite amount of supplied energy in steady state.

However, as detailed by~\cite{macchelli2015control}, this obstacle can be overcome functionallyvia a new passive output $\bar{y}$ such that the learned model remains passive with respect to the new input/output pair, namely $\frac{d}{dt}\mu(\hat{H}\mid E)\le \bar{y}^\top u$. The closed-loop system obtained from the augmented interconnection matching 
\begin{equation}
\begin{pmatrix}
u \\
\bar{y}
\end{pmatrix}
=
\begin{pmatrix}
0 & -I \\
I & 0
\end{pmatrix}
\begin{pmatrix}
u_c \\
y_c
\end{pmatrix}
+
\begin{pmatrix}
u' \\
0
\end{pmatrix},
\label{eq:new_interconnection}
\end{equation}
which replaces $y$ by the new output $\bar{y}$, for the model \eqref{dPHSform} and the controller \eqref{eq:controller}, is characterized by a set of Casimir functions of the form~\cref{eq:casimir}, where each functional $\Psi$ satisfies
\begin{equation}
P_1 \frac{\partial}{\partial z} \frac{\delta \Psi}{\delta x}(x)
+ (P_0 - G_0)\frac{\delta \Psi}{\delta x}(x) = 0.
\label{eq:new_casimir_pde}
\end{equation}
Next, we determine the new passive output $\bar{y}$ such that the controller can act on the Hamiltonian $\hat{H}$ without being restricted by the dissipation obstacle.
\begin{prop}[adapted from~\cite{macchelli2015control}]\label{prop:2}
Consider the GP-dPHS model with dynamics given by \eqref{dPHSform}, boundary input $u$ defined in \eqref{for:bound}, the controller \eqref{eq:controller}, and the power-conserving interconnection \eqref{eq:new_interconnection}. Then, the function \eqref{eq:casimir} is a Casimir invariant, with $\Psi$ satisfying \eqref{eq:new_casimir_pde}, if
\begin{align}
\bar{y}(t) &= y(t) + S u(t)\label{eq:ybar}\\
&+ 2 \frac{G_c^\top \Gamma}{\|G_c^\top \Gamma\|^2}
\!\int_a^b\! 
\frac{\delta \Psi}{\delta x}^\top(x(t,z))\, G_0 \,
\frac{\delta \mu(\hat{H}|\mathcal{E})}{\delta x}(x(t,z)) \, dz,\notag
\end{align}
where $y$ is the original output defined in \eqref{for:bound}, and
\begin{equation}
S = \frac{G_c^\top \Gamma}{(b-a)\|G_c^\top \Gamma\|^4}
\left(
\int_a^b 
\frac{\delta \Psi}{\delta x}^\top G_0 \frac{\delta \Psi}{\delta x} \, dz
\right)
\Gamma^\top G_c + S',
\label{eq:S}
\end{equation}
with $S' = (S')^\top \geq 0$ and the controller \eqref{eq:controller} satisfies
\begin{align}
\begin{split}
\left(J_c + G_c S G_c^\top \right)\Gamma
+ G_c \tilde{W} R
\renewcommand*{\arraystretch}{1.2}
\begin{bmatrix}
\frac{\delta \Psi}{\delta x}(b) \\
\frac{\delta \Psi}{\delta x}(a)
\end{bmatrix}
&= 0, \label{eq:prop7_cond1} \\
G_c^\top \Gamma
+ W R
\renewcommand*{\arraystretch}{1.2}
\begin{bmatrix}
\frac{\delta \Psi}{\delta x}(b) \\
\frac{\delta \Psi}{\delta x}(a)
\end{bmatrix}
&= 0. 
\end{split}
\end{align}
\end{prop}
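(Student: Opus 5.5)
We will verify directly that $\dot C = 0$ along trajectories of the closed loop of \eqref{dPHSform}, \eqref{eq:controller} and \eqref{eq:new_interconnection} with $u'=0$. Write $e := \frac{\delta \mu(\hat H|\mathcal{E})}{\delta x}$ and $\psi := \frac{\delta \Psi}{\delta x}$. By \eqref{eq:casimir},
\[
\dot C = \Gamma^\top \dot x_c + \int_a^b \psi^\top \partial_t x \, dz .
\]
The plan is to expand each term, express everything through boundary traces and the dissipative integral, and then choose the conditions so that the coefficients of $\frac{\partial H_c}{\partial x_c}$ and of the boundary traces of $e$ vanish for every admissible Hamiltonian.

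\emph{Distributed term.} Substitute \eqref{dPHSform} and integrate by parts. Use $P_1=P_1^\top$ and $P_0=-P_0^\top$, and then \eqref{eq:new_casimir_pde} to remove the interior term. This leaves
\[
\bigl[\psi^\top P_1 e\bigr]_a^b \;-\; 2\int_a^b \psi^\top G_0 e \, dz .
\]
The factor $2$ arises because \eqref{eq:new_casimir_pde} contains $-G_0\psi$, and transposing it produces a second copy of $-\psi^\top G_0 e$. The boundary term is a bilinear form in the traces of $\psi$ and $e$. Using \eqref{for:boundaryport}, the invertibility of $(W^\top\ \widetilde W^\top)$ and the relations involving $\Sigma$, we rewrite it as
\[
-\Bigl(WR\,[\psi(b);\psi(a)]\Bigr)^\top y \;-\; \Bigl(\widetilde W R\,[\psi(b);\psi(a)]\Bigr)^\top u ,
\]
where $R$ is the matrix mapping boundary traces to port variables. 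Fixing the sign conventions in this rewriting is where we will have to be careful.

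\emph{Controller term.} With $u_c=\bar y$ and $u=-y_c=-G_c^\top\frac{\partial H_c}{\partial x_c}$, we get
\[
\Gamma^\top \dot x_c = \Gamma^\top J_c \frac{\partial H_c}{\partial x_c} + \Gamma^\top G_c \bar y .
\]
Substituting \eqref{eq:ybar} yields three pieces:
\begin{itemize}
\item $\Gamma^\top G_c\, y$;
\item $\Gamma^\top G_c S u = -\Gamma^\top G_c S G_c^\top \frac{\partial H_c}{\partial x_c}$;
\item the integral correction. Since $\Gamma^\top G_c\,\frac{G_c^\top\Gamma}{\|G_c^\top\Gamma\|^2}=1$, it contributes exactly $+2\int\psi^\top G_0 e\,dz$, which cancels the dissipative term above. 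This is the mechanism that circumvents the dissipation obstacle.
\end{itemize}

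\emph{Collecting terms.} The coefficient of $y$ is $\bigl(G_c^\top\Gamma + WR[\psi(b);\psi(a)]\bigr)^\top$, which vanishes by the second condition in \eqref{eq:prop7_cond1}. The coefficient of $\frac{\partial H_c}{\partial x_c}$, after writing $u=-G_c^\top\frac{\partial H_c}{\partial x_c}$ in the $\widetilde W$ boundary term, is
\[
\Bigl((J_c+G_cSG_c^\top)\Gamma + G_c\widetilde W R[\psi(b);\psi(a)]\Bigr)^\top
\]
up to sign. This uses $\Gamma^\top J_c = -(J_c\Gamma)^\top$ and the symmetry of $S$ on the relevant direction; the latter requires checking that the rank-one part of \eqref{eq:S} together with $S'$ behaves correctly in $\Gamma^\top G_c S$ versus $S^\top G_c^\top\Gamma$. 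By the first condition in \eqref{eq:prop7_cond1} this coefficient vanishes. Hence $\dot C=0$ for all choices of $\mu(\hat H|\mathcal{E})$ and $H_c$.

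\emph{Remaining check.} The specific form of $S$ in \eqref{eq:S} enters only through passivity, namely $\frac{d}{dt}\mu(\hat H|\mathcal{E}) \le \bar y^\top u$, which makes \eqref{eq:new_interconnection} well posed as a passive interconnection. We would verify this separately with a completion-of-squares argument. By Cauchy--Schwarz, the cross term $2u^\top\frac{G_c^\top\Gamma}{\|G_c^\top\Gamma\|^2}\int\psi^\top G_0 e\,dz$ is dominated by $\int e^\top G_0 e\,dz$ plus the rank-one quadratic form in $S$. The $(b-a)$ normalization is what makes this bound hold, as in \cite{macchelli2015control}.

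The main obstacle is the boundary-term bookkeeping: expressing $[\psi^\top P_1 e]_a^b$ correctly in terms of $W$, $\widetilde W$, $R$, $u$ and $y$ with consistent signs. The rest is direct substitution.
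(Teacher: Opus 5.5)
Your direct computation of $\dot C$ is the argument behind Proposition~7 of \cite{macchelli2015control}, which the paper simply invokes with $H$ replaced by $\mu(\hat H\mid\mathcal{E})$. Writing it out shows why that substitution is harmless: $e$ enters $\dot C$ linearly, and every occurrence of it cancels.

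The step you flagged, the boundary rewriting, is stated with the wrong sign. With your sign, the coefficients would be $G_c^\top\Gamma-WR\bar\psi$ and $(J_c+G_cSG_c^\top)\Gamma-G_c\widetilde W R\bar\psi$. The matching conditions do not annihilate these, and your ``Collecting terms'' paragraph in fact uses the opposite sign. Here is the fix.
\begin{itemize}
\item Let $R$ be the matrix in \eqref{for:boundaryport}, $\bar\psi:=[\psi(b);\psi(a)]$ and $\bar e:=[e(b);e(a)]$. Expanding and using $P_1=P_1^\top$ gives $[\psi^\top P_1 e]_a^b=(R\bar\psi)^\top\Sigma R\bar e$.
\item Let $M$ be the $2n\times 2n$ matrix with block rows $W$ and $\widetilde W$. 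The three conditions on $W,\widetilde W$ say exactly $M\Sigma M^\top=\Sigma$. Since $M$ is invertible and $\Sigma^{-1}=\Sigma$, also $M^\top\Sigma M=\Sigma$.
\end{itemize}
Hence
\[
[\psi^\top P_1 e]_a^b=(WR\bar\psi)^\top y+(\widetilde WR\bar\psi)^\top u ,
\]
with plus signs. Substituting this, the $G_0$-integrals cancel and
\[
\dot C=\bigl(G_c^\top\Gamma+WR\bar\psi\bigr)^\top y-\Bigl((J_c+G_cSG_c^\top)\Gamma+G_c\widetilde WR\bar\psi\Bigr)^\top\frac{\partial H_c}{\partial x_c}
\]
holds exactly, not ``up to sign''. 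No further check on $S$ is needed. It is symmetric by construction, because its rank-one part is a scalar multiple of $G_c^\top\Gamma\,\Gamma^\top G_c$ and $S'$ is symmetric.

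Your closing passivity remark is not part of what the proposition asserts, since the Casimir property uses $S$ only through the first matching condition. As stated, however, it is wrong. Set $\alpha:=u^\top G_c^\top\Gamma/\|G_c^\top\Gamma\|^2$. Then
\[
\bar y^\top u-\frac{d}{dt}\mu(\hat H\mid\mathcal E)=\int_a^b e^\top G_0e\,dz+2\alpha\int_a^b\psi^\top G_0e\,dz+u^\top Su .
\]
Completing the square into $\int_a^b(e+\alpha\psi)^\top G_0(e+\alpha\psi)\,dz$ requires the rank-one coefficient $\|G_c^\top\Gamma\|^{-4}\int_a^b\psi^\top G_0\psi\,dz$, without the factor $1/(b-a)$.

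For a concrete failure, take $b-a>1$, $S'=0$ and the profile $e=-\alpha\psi$. This profile is compatible with the ports, because the second matching condition gives $u=-\alpha WR\bar\psi=\alpha G_c^\top\Gamma$. The difference above then equals $\alpha^2\bigl(\frac{1}{b-a}-1\bigr)\int_a^b\psi^\top G_0\psi\,dz<0$. So the $(b-a)$ normalization does not make the bound hold. The paper's own shallow-water example agrees: with $\psi_1=\delta\Psi_1/\delta x$ it uses $S_{11}=DL=\int_0^L\psi_1^\top G_0\psi_1\,dz$, which is the version without that factor.
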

\begin{proof}
The proof follows that of Proposition 7 in~\cite{macchelli2015control}, replacing the system Hamiltonian \(H\) by the learned mean Hamiltonian \(\mu(\hat{H}\mid \mathcal{E})\).
\end{proof}
In this way, with the new virtual output $\bar{y}$~\cref{eq:ybar} and a controller~\cref{eq:controller}, satisfying~\cref{eq:prop7_cond1}, we can shape the closed-loop Hamiltonian $H_d$ via $H_c$ so that it attains a minimum at a desired equilibrium.

Next, we analyze the consequences of connecting the designed controller~\cref{eq:controller} with the actual PDE system~\cref{for:pch} instead of the model~\cref{dPHSform}. 
For this, the dynamics of the closed-loop system is written as
\begin{align}
\begin{split}
\label{eq:actual_closedloop}
\frac{\partial x}{\partial t}(t,z) &\!=\! \left(\!P_1 \frac{\partial}{\partial z} + (P_0 - G_0)\!\right)\frac{\delta \mu(\hat{H}\mid\mathcal{E})}{\delta x}(x(t,z))\! +\! \eta(x),\\
\dot{x}_c(t) &= J_c \frac{\partial H_c}{\partial x_c}(x_c(t)) + G_c \bar{y}(t), \\
u(t) &= -G_c^\top \frac{\partial H_c}{\partial x_c}(x_c(t)),
\end{split}
\end{align}
where $\mu(\hat{H}\mid\mathcal{E})$ is the posterior mean of the learned Hamiltonian and $\eta(x)$ is the unknown model error.

Before stating the main result, we impose the following assumption.
\begin{assum}\label{ass:bound}
There exists a probability $p\in(0,1)$ and a constant $\bar{\eta}\in\R_+$ such that the model error satisfies $\|\eta(x)\|_{L^2(a,b)} \le \bar{\eta}$ with probability at least $1-p$ for all admissible states $x$.
\end{assum}

\Cref{ass:bound} is a standard robustness-type assumption in learning-based control and expresses that, with high probability, the residual model mismatch remains uniformly bounded on the domain of interest.
\begin{prop}\label{prop:3}
Consider the partially unknown PDE system~\cref{for:pch}, a GP-dPHS model~\cref{dPHSform} trained on observed data \(\mathcal{D}\) satisfying~\cref{ass:bound}, and a controller \eqref{eq:controller} following~\cref{prop:2}, which is connected with~\cref{for:pch} via \cref{eq:new_interconnection}. Let \(H_d\) be coercive, and assume that there exists \(\lambda>0\) such that $v^\top G_0 v \ge \lambda \|v\|^2$ for all admissible \(v\). Then, for any \(\varepsilon\in(0,2\lambda)\), the closed-loop energy satisfies
\[
\frac{dH_d}{dt} \le -\left(\lambda-\frac{\varepsilon}{2}\right)\left\|\frac{\delta \mu(\hat{H}\mid\mathcal{E})}{\delta x}\right\|_{L^2(a,b)}^2 + \frac{1}{2\varepsilon}\bar{\eta}^2
\]
with probability at least \(1-p\). Consequently, the state \(x(t,z)\) is ultimately bounded with probability at least \(1-p\), and the trajectories evolve inside a sublevel set of \(H_d\).
\end{prop}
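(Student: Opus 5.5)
The plan is to differentiate $H_d(x,x_c)=\mu(\hat H\mid\mathcal{E})(x)+H_c(x_c)$ along the trajectories of the perturbed closed loop \eqref{eq:actual_closedloop}. I split the time derivative into two parts: the \emph{nominal} part, generated by the learned vector field \eqref{dPHSform} together with the controller, and the \emph{perturbation} part, generated by $\eta(x)$. Write $e:=\frac{\delta \mu(\hat{H}\mid\mathcal{E})}{\delta x}$. The chain rule for functionals gives
\[
\frac{dH_d}{dt}=\int_a^b e^\top\Bigl(P_1\tfrac{\partial}{\partial z}+(P_0-G_0)\Bigr)e\,dz+\int_a^b e^\top\eta\,dz+\frac{\partial H_c}{\partial x_c}^\top\dot x_c .
\]

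\textbf{Nominal part.} For the first and third terms I reuse the energy balance of the dPHS, with $H$ replaced by the learned mean, as in \cref{prop:2}. Integrating by parts, skew-symmetry of $P_0$ and symmetry of $P_1$ turn the first integral into the boundary supply $y^\top u$ minus $\int e^\top G_0 e\,dz$.

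Next I use the design of $\bar y$ in \eqref{eq:ybar}–\eqref{eq:S} and the interconnection \eqref{eq:new_interconnection}. I take from \cref{prop:2} (i.e. from Proposition 7 of \cite{macchelli2015control}) the fact that the model is passive with respect to the pair $(u,\bar y)$. Combined with the power-preserving interconnection, $J_c$ skew, and $S'\ge0$, the boundary supply and the controller power cancel, and the extra terms only add non-positive contributions. This leaves the nominal bound
\[
\frac{dH_d}{dt}\Big|_{\eta=0}\le-\int_a^b e^\top G_0 e\,dz\le-\lambda\|e\|_{L^2}^2,
\]
where the last step uses the assumed coercivity of $G_0$.

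\textbf{Main obstacle.} I expect this nominal step to be the hardest part. The corrected output $\bar y$ contains terms involving $\Psi$ and $G_0$, and one must check that the model's passivity inequality with respect to $(u,\bar y)$ still retains a dissipation term that can be bounded below by $\lambda\|e\|^2$, rather than just "$\le \bar y^\top u$". My first attempt would be to follow Macchelli's computation explicitly: expand $\bar y^\top u$ and use $u=-G_c^\top\partial H_c/\partial x_c$, keeping the $\int e^\top G_0 e$ term separate. If exact retention fails, I would settle for keeping the $S'$ and $\Psi$-related terms with sign $\le0$ and argue that only $\int e^\top G_0 e$ remains on the right-hand side.

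\textbf{Perturbation part.} For the mismatch term I apply Cauchy–Schwarz and then Young's inequality:
\[
\int_a^b e^\top\eta\,dz\le\|e\|\,\|\eta\|\le\frac{\varepsilon}{2}\|e\|^2+\frac{1}{2\varepsilon}\|\eta\|^2 .
\]
On the event of \cref{ass:bound}, which has probability at least $1-p$, we have $\|\eta\|\le\bar\eta$ uniformly in time. Adding this to the nominal bound yields the stated inequality, and $\varepsilon<2\lambda$ ensures $\lambda-\varepsilon/2>0$.

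\textbf{Ultimate boundedness.} Outside the set $\{\|e\|^2\le \bar\eta^2/(2\varepsilon(\lambda-\varepsilon/2))\}$ the energy $H_d$ strictly decreases. Coercivity of $H_d$ makes its sublevel sets bounded. I then take $c$ as the supremum of $H_d$ over that region intersected with a suitable sublevel set, which relates smallness of $\|e\|$ to $H_d$ through coercivity and the shaped minimum. A standard comparison argument, together with \cref{assum:3} for existence of solutions, shows that trajectories enter and remain in $\{H_d\le \max(c,H_d(0))\}$. Hence $x$ is ultimately bounded, with the same probability $1-p$.
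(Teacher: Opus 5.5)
Your outline follows the paper's proof: differentiate $H_d$, let the interconnection terms cancel so that only $-\int_a^b e^\top G_0e\,dz$ survives, then apply Cauchy--Schwarz/Young and \cref{ass:bound}. The step you flagged as the main obstacle is a genuine gap, and your fallback does not close it. The paper's proof simply asserts that ``the conservative interconnection terms cancel'', but this is false once $y$ is replaced by $\bar y$.

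With $u'=0$, \eqref{eq:new_interconnection} gives $u_c=\bar y$ and $y_c=-u$, so $\dot H_c=y_c^\top u_c=-u^\top\bar y$. For $\eta=0$ you also have $\frac{d}{dt}\mu(\hat H\mid\mathcal E)=y^\top u-\int_a^b e^\top G_0e\,dz$. Hence, exactly,
\[
\frac{dH_d}{dt}=-\int_a^b e^\top G_0 e\,dz-u^\top S u-2\beta\int_a^b\frac{\delta\Psi}{\delta x}^\top G_0\, e\,dz,\qquad \beta:=\frac{\Gamma^\top G_c u}{\|G_c^\top\Gamma\|^2}.
\]
The interconnection cancels $y^\top u$ against $u^\top y$, not against $u^\top\bar y$. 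The cross term is sign-indefinite, so the $\Psi$-terms cannot be kept ``with sign $\le 0$'' on their own.

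The only way to get a sign is the completion of squares behind the first term of $S$ in \eqref{eq:S}. Up to its normalisation, this gives $-\int_a^b\tilde e^\top G_0\tilde e\,dz-u^\top S'u$ with $\tilde e:=e+\beta\frac{\delta\Psi}{\delta x}$. In other words, the dissipation is entirely consumed in making the model passive with respect to $(u,\bar y)$.

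This is not a technicality. Take any closed-loop equilibrium with $e^\star\neq0$, which is exactly what $\bar y$ is introduced for; in the shallow-water example $\int_0^L e^{\star\top}G_0e^\star\,dz=D\bar Q^2L>0$. There $\dot H_d=0$ although $\int_a^b e^{\star\top}G_0e^\star\,dz>0$. So for $\eta\equiv0$ (or small $\bar\eta$) the claimed inequality fails at that equilibrium, and no argument along these lines can produce it.

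What you can prove is a bound in terms of $\tilde e$. The mismatch term then splits as $\int_a^b\tilde e^\top\eta\,dz-\beta\int_a^b\frac{\delta\Psi}{\delta x}^\top\eta\,dz$. The second piece is not dominated by dissipation and needs a separate bound on $u$.

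Your ultimate-boundedness step has a further gap, which the paper also glosses over. The region where the right-hand side may be nonnegative, $\{\|e\|^2\le\bar\eta^2/(2\varepsilon(\lambda-\varepsilon/2))\}$, constrains only $\frac{\delta\mu(\hat H\mid\mathcal E)}{\delta x}$ and does not involve $x_c$ at all. It therefore lies in no sublevel set of $H_d=\mu(\hat H\mid\mathcal E)+H_c(x_c)$. The supremum you propose to take is $+\infty$, and while $\|e\|$ stays small the estimate allows $H_d$ to grow at rate $\bar\eta^2/(2\varepsilon)$.

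Coercivity of $H_d$ does not bridge this. You would need an estimate of the form $\|e\|^2\ge\alpha(H_d)-c$ with $\alpha$ unbounded, for instance via the Casimir relations tying $x_c$ to $x$. Those relations are no longer invariant when $\eta\neq0$, however, since then $\dot C=\int_a^b\frac{\delta\Psi}{\delta x}^\top\eta\,dz$.
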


\begin{proof}
Define $\displaystyle{e(x):=\frac{\delta \mu(\hat{H}\mid\mathcal{E})}{\delta x}(x)}$. Differentiating \(H_d\) along the closed-loop dynamics yields
\begin{align}
    \frac{dH_d}{dt} &= - \int_a^b e(x)^\top G_0 e(x) \, dz \notag\\
    &\quad + \int_a^b e(x)^\top \left(P_1 \frac{\partial}{\partial z} + P_0 \right)e(x) \, dz \notag\\
    &\quad + \int_a^b e(x)^\top \eta(x) \, dz  + \left(\frac{\partial H_c}{\partial x_c}\right)^\top J_c \frac{\partial H_c}{\partial x_c}. \label{for:proof}
\end{align}
Due to the skew-symmetry of \(J_c\) and \(P_0\), together with the power-preserving structure of the interconnection, the non-dissipative terms do not contribute to the decrease of \(H_d\). Here, we use the standard dPHS power balance under the imposed boundary interconnection, so that the conservative interconnection terms cancel in the energy identity. Hence,
\[
\frac{dH_d}{dt} \le - \int_a^b e(x)^\top G_0 e(x)\,dz + \int_a^b e(x)^\top \eta(x)\,dz.
\]
Using the coercivity of \(G_0\), we obtain
\[
\int_a^b e(x)^\top G_0 e(x)\,dz \ge \lambda \|e(x)\|_{L^2(a,b)}^2.
\]
Next, by Cauchy--Schwarz and Young's inequality, for any \(\varepsilon>0\),
\begin{align*}
\int_a^b e(x)^\top \eta(x)\,dz &\le \|e(x)\|_{L^2(a,b)}\|\eta(x)\|_{L^2(a,b)}\\
&\le \frac{\varepsilon}{2}\|e(x)\|_{L^2(a,b)}^2 + \frac{1}{2\varepsilon}\|\eta(x)\|_{L^2(a,b)}^2.
\end{align*}
Therefore,
\[
\frac{dH_d}{dt}
\le -\left(\lambda-\frac{\varepsilon}{2}\right)\|e(x)\|_{L^2(a,b)}^2
+ \frac{1}{2\varepsilon}\|\eta(x)\|_{L^2(a,b)}^2.
\]
By \cref{ass:bound}, with probability at least \(1-p\), $\|\eta(x)\|_{L^2(a,b)}^2 \le \bar{\eta}^2$,
and hence
\[
\frac{dH_d}{dt}
\le -\left(\lambda-\frac{\varepsilon}{2}\right)\|e(x)\|_{L^2(a,b)}^2
+ \frac{1}{2\varepsilon}\bar{\eta}^2.
\]
Choosing \(\varepsilon\in(0,2\lambda)\) makes the first coefficient strictly positive. Since \(H_d\) is radially unbounded, its sublevel sets are bounded, and the above estimate implies ultimate boundedness of the trajectories with probability at least \(1-p\).
\end{proof}

The proposition establishes a probabilistic ultimate boundedness result for the closed-loop distributed port-Hamiltonian system under model uncertainty. In particular, the distributed dissipation represented by \(G_0\) dominates the residual model mismatch up to a bounded perturbation level, and the resulting energy estimate guarantees that the trajectories remain confined to a bounded sublevel set of \(H_d\) with probability at least \(1-p\). In the ideal case \(\eta(x)=0\), the perturbation term vanishes and the closed-loop energy becomes non-increasing.
\section{Evaluation}
\label{sec:sim}
For the evaluation, we consider the boundary control problem of the nonlinear shallow water equation, adapted from~\cite{macchelli2015control}. Specifically, we study a rectangular open channel of length $L$ and unitary width, equipped with upstream and downstream gates. The fluid is assumed to have unit density and to be subject to linear friction. 

With the one-dimensional spatial coordinate $z \in [0,L]$, the system can be written in dPHS form as
\begin{equation}
\label{for:shwex}
\frac{\partial}{\partial t}
\begin{pmatrix}
q \\
p
\end{pmatrix}
=
\left(
\begin{pmatrix}
0 & -1 \\
-1 & 0
\end{pmatrix}
\frac{\partial}{\partial z}
-
\begin{pmatrix}
0 & 0 \\
0 & D
\end{pmatrix}
\right)
\frac{\delta H}{\delta x}(q,p),
\end{equation}
where $x = (q,p)$ and $D \ge 0$ models dissipation effects. The variable $q(t,z) > 0$ denotes the infinitesimal volume, which corresponds to the water height for a rectangular channel, while $p(t,z)$ represents the kinetic momentum density. The Hamiltonian is given by
\begin{equation}
H(q,p) = \frac{1}{2} \int_0^L \left( q p^2+\Delta(p) + g q^2 \right) dz,\label{for:ukHam}
\end{equation}
where $g$ is the gravitational constant and $\Delta(p)$ is a nonlinear term typically used to model turbulence, see~\cite{pope2001turbulent}. For illustration, we choose $\Delta(p)=\exp(-(p-5)^2)$, but assume throughout the remainder of this section that this term is unknown.

To account for this uncertainty, we can learn the Hamiltonian using a GP-dPHS approach as described in~\cref{sec:ctrl}. For simplicity, we directly train a Gaussian process with a squared exponential kernel on~\cref{for:ukHam}, using the prior mean $m(x)=q p^2 + g q^2$ and 100 training points $(q,p)$ sampled from the domain $[0,\,10]\times [-10,\,10]$. The resulting posterior mean $\mu(\hat{H}\mid\mathcal{E})$ is then used as a surrogate for~\cref{for:ukHam}. Due to the structure of the shallow water system~\cref{for:shwex} and the boundedness of the squared exponential kernel (and hence of the GP posterior mean), the dissipative term can dominate the model uncertainty.

The corresponding co-energy variables are given by $\frac{\delta \mu(\hat{H}\mid\mathcal{E})}{\delta q} =: P(q,p)$ and $\frac{\delta \mu(\hat{H}\mid\mathcal{E})}{\delta p} =: Q(q,p)$, which represent pressure and flow, respectively. The boundary input/output pair is defined as
\begin{align}
u(t) &= 
\begin{pmatrix}
Q(0,t) \\
P(L,t)
\end{pmatrix},
\quad
y(t) =
\begin{pmatrix}
P(0,t) \\
-Q(L,t)
\end{pmatrix}.
\end{align}
The equilibrium $(q^\star, p^\star)$ satisfies
\begin{align}\label{for:eqpoint}
\frac{\partial Q^\star}{\partial z} &= 0,\, \frac{\partial P^\star}{\partial z} + D Q^\star = 0,
\end{align}
with $Q^\star(z) = \bar{Q},\,P^\star(z) = \bar{Q} D (L - z) + \bar{P}$.

For the controller design, we consider a system-to-connect~\cref{eq:controller} of dimension $n_c = 2$. This choice allows us to construct the following Casimir functions in closed-loop
\begin{align}
C_1(x_{c1}, q, p) &= x_{c1} - \int_0^L [D (L - z) q + p] \, dz \; =: \Psi_1(q,p), \notag\\
C_2(x_{c2}, q, p) &= x_{c2} - \int_0^L q \, dz \; =: \Psi_2(q,p).\label{for:casinex}
\end{align}
Importantly, the derivation of these Casimir functions does not require explicit knowledge of the Hamiltonian~\cref{for:ukHam}, see~\cref{prop:1}.

Following~\cref{prop:3}, the modified passive output $\bar{y}$ used to overcome the \emph{dissipation obstacle} is given by~\cref{eq:ybar}
\begin{align}
\bar{y}(t)
&=
y(t)
-
\begin{pmatrix}
2D \displaystyle\int_0^L \frac{\delta \mu(\hat{H}\mid\mathcal{E})}{\delta p}(q(t,z),p(t,z))\,dz \notag\\
0
\end{pmatrix}\\
&+
\begin{pmatrix}
DL & 0 \\
0 & 0
\end{pmatrix}
u(t)
\tag{39}
\end{align}
where $G_c=I$, $\Gamma=-I$, $S=\begin{bmatrix}
    DL & 0\\0 & 0
\end{bmatrix}$ in~\cref{prop:2}, and $J_c=\begin{bmatrix}
    0 & 1\\-1 & 0
\end{bmatrix}$ in~\cref{eq:ybar}; see~\cite{macchelli2015control} for details.

Finally, the closed-loop Hamiltonian can be shaped by selecting $H_c$ in~\cref{eq:controller} as
\begin{align}
H_c(x_{c1}, x_{c2}) &=
\frac{1}{2}\,\Xi_1 (x_{c1} - x_{c1}^\star)^2
+
\frac{1}{2}\,\Xi_2 (x_{c2} - x_{c2}^\star)^2\notag\\
&- \bar{Q}\,x_{c1} - \bar{P}\,x_{c2}
\end{align}
with positive constants $\Xi_1, \Xi_2 \ge 0$. The reference values $x_{c1}^\star, x_{c2}^\star$ correspond to the values of the Casimir functions~\cref{for:casinex} at the equilibrium~\cref{for:eqpoint}.

\Cref{fig:gpcphs} illustrates the evolution of the water level for the closed-loop system. Starting from a constant initial condition $q(0,\cdot)=1$ (gray dashed line), the controller regulates the inflow at the left boundary such that the water level closely converges to the desired profile (black dashed line).

\begin{figure}[t]
\begin{center}
	\includegraphics[width=\columnwidth]{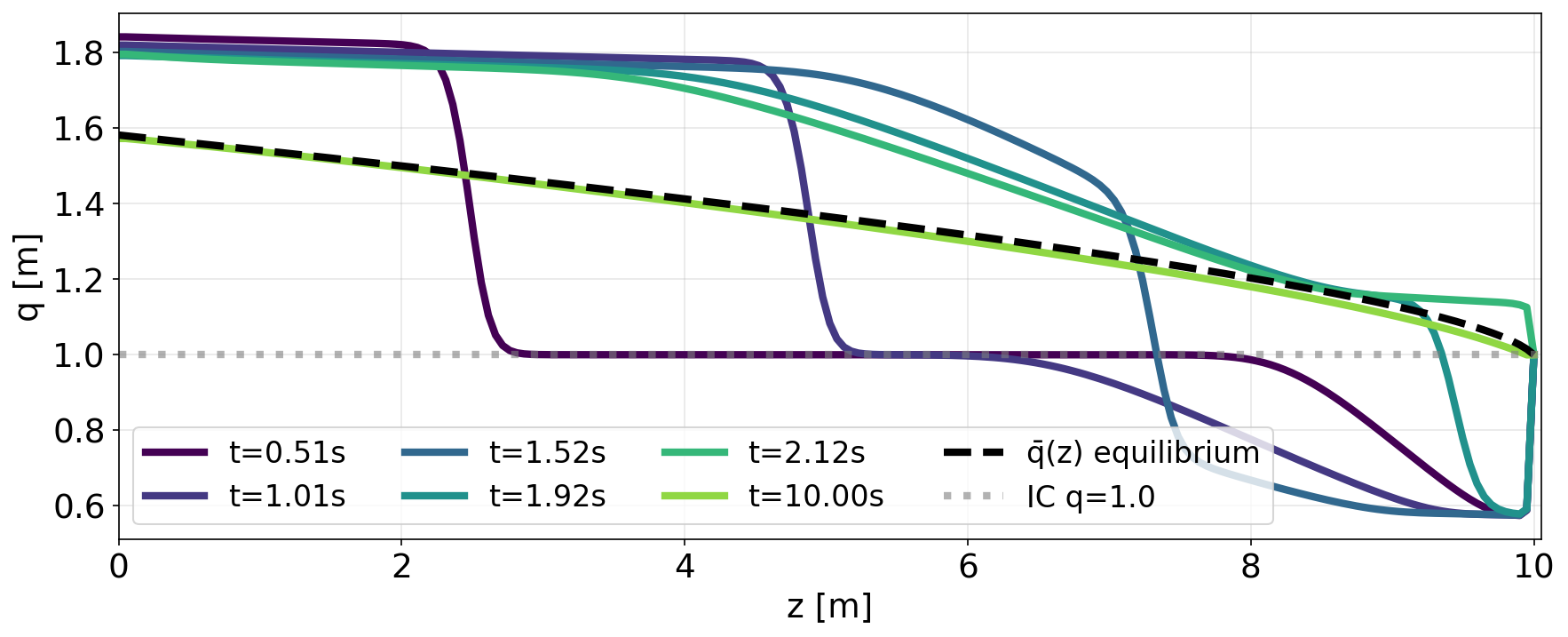}
	\vspace{-0.2cm}\caption{Controlled water level. At $t=0$, the water level is constant over the spatial domain (gray dashed). Then, the controller lets water flow into the system from the left boundary, reaching the desired equilibrium (black dashed) after 10 seconds.}\vspace{-0.7cm}
	\label{fig:gpcphs}
\end{center}
\end{figure}

The effect of using the learned model $\mu(\hat{H}\mid\mathcal{E})$ instead of the true, unknown Hamiltonian is shown in~\cref{fig:ham}. The top plot indicates that the closed-loop Hamiltonian may temporarily increase due to model mismatch, since the controller is designed using the learned model but applied to the true system. Nevertheless, \cref{prop:3} guarantees that the system trajectories remain bounded and that the Hamiltonian converges to a set.

As comparision, if the model were perfect, i.e., $\mu(\hat{H}\mid\mathcal{E})=H$, the closed-loop Hamiltonian $H_d$ would be strictly non-increasing as depicted in the bottom plot of~\cref{fig:ham}.
\begin{figure}[h]
    \centering

    \begin{subfigure}{\columnwidth}
        \centering
        \includegraphics[width=\columnwidth]{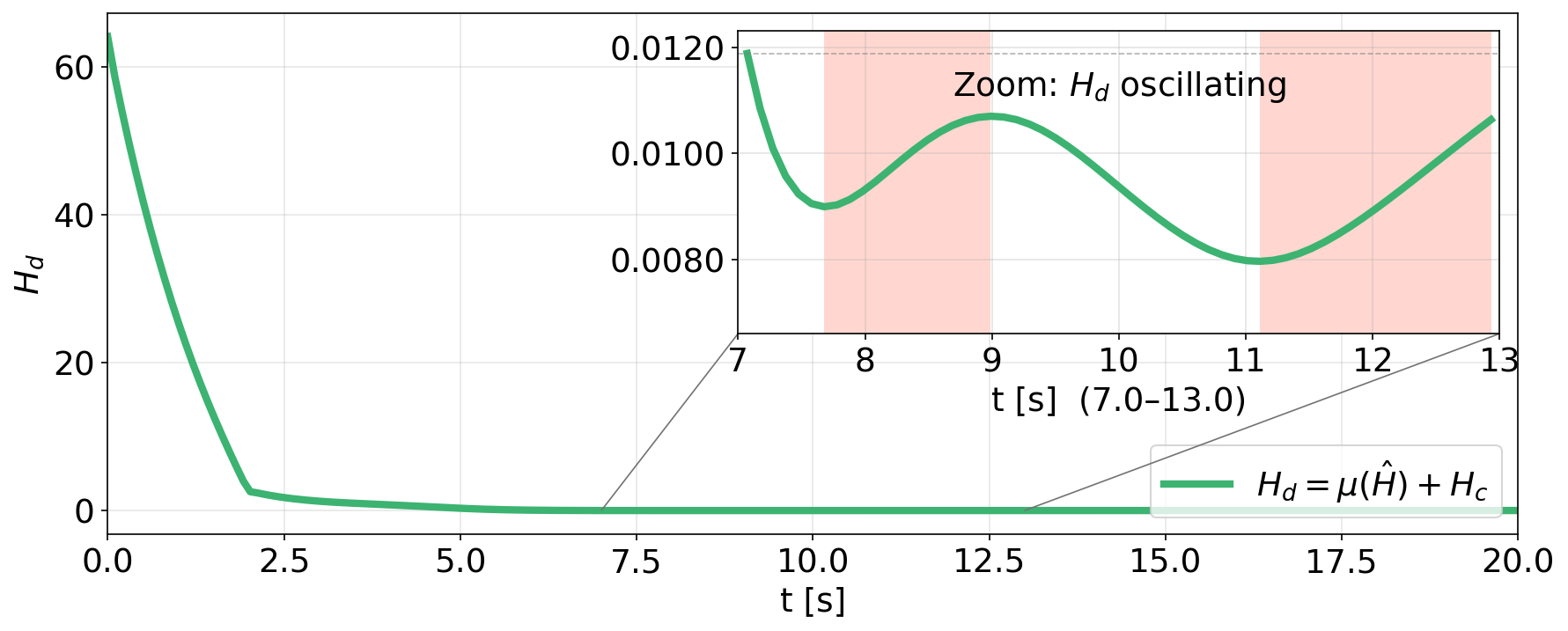}
        \caption{Controller designed based on the learned Hamiltonian $\mu(\hat{H})$.}
    \end{subfigure}

    \vspace{0.5cm}

    \begin{subfigure}{\columnwidth}
        \centering
        \includegraphics[width=\columnwidth]{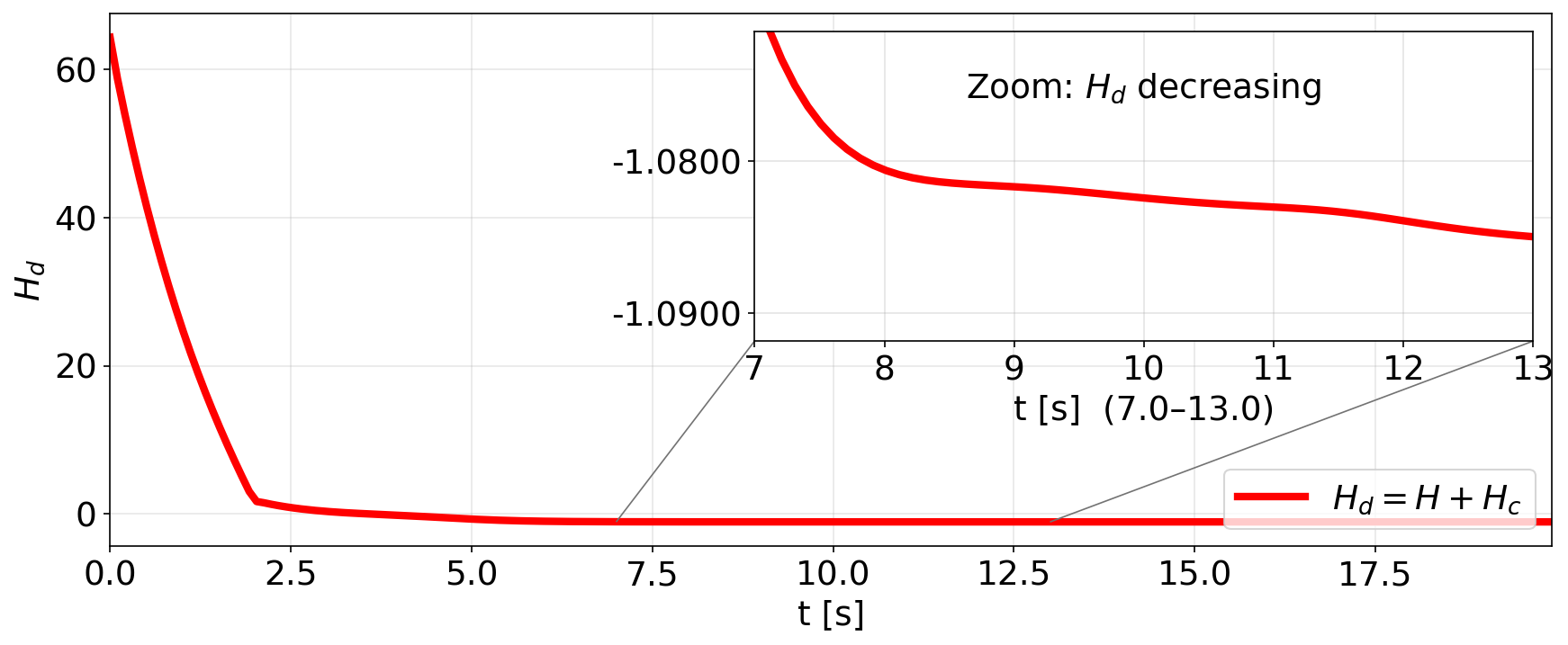}
        \caption{Controller designed based on the actual (unknown) Hamiltonian.}
    \end{subfigure}

    \caption{Comparison of the evolution of the Hamiltonian with a controller designed based on a model (a) and the actual system dynamics (b). As stated in~\cref{prop:3}, the model error leads to a Hamiltonian that converges to a set around the steady state. \label{fig:ham}}
\end{figure}
\section*{Conclusion}
In this paper, we developed a Bayesian data-driven boundary control framework for 
distributed port-Hamiltonian systems governed by PDEs with partially unknown dynamics. A GP-dPHS model enables the learning of the underlying Hamiltonian structure from localized measurements, while preserving a physically meaningful representation of the system. Based on the learned model, we designed a boundary controller by interconnection and used structurally induced Casimir relations to construct virtual passive outputs that overcome the classical dissipation obstacle. In addition, the uncertainty information provided by the GP model was incorporated into an energy-based robustness analysis, yielding probabilistic ultimate boundedness of the closed-loop dynamics under bounded model mismatch. Future work will address extensions to higher-dimensional spatial domains and more general classes of distributed port-Hamiltonian systems.

\bibliographystyle{IEEEtran}
\bibliography{root}

\end{document}